	\DeclareMathOperator{\poly}{Poly}
\newtheorem{theorem}{Theorem}
\newtheorem{lemma}[theorem]{Lemma}
\newtheorem{observation}[theorem]{Observation}
\definecolor{sz}{rgb}{0.1,0.2,0.6}
\definecolor{blue}{rgb}{0.1,0.2,0.7}
\definecolor{brown}{rgb}{0.6,0.6,0.2}
\newcommand{\N}{\mathbb{N}}
\title{Fine-Grained Complexity of MIS on Regular Graphs}
\author{Saeed Akhoondian Amiri\thanks{Department of Computer Science, University of Cologne, Germany, amiri@cs.uni-koeln.de.}}
\begin{document}
	
	\maketitle

	\begin{abstract}
		\noindent 
		
		We show that there is no subexponential time algorithm for computing the exact solution of the maximum independent set problem in $d$-regular graphs, unless ETH fails. We expand our method to show that it helps to provide lowerbounds for other covering problems such as vertex cover and clique. We utilize the construction to show the NP-hardness of MIS on 5-regular planar graphs, closing the exact complexity status of the problem on regular planar graphs.
	\end{abstract}

	\section{Introduction}\label{sec:intro}
	
	In this paper, we investigate the hardness of the exact computation of the maximum independent set problem in graphs. Independent set problem is among fundamental problems in graph theory that asks for a set of vertices that are pairwise non-adjacent; we are interested in finding such a set of maximum size and this is the maximum independent set (MIS) problem. 
	MIS problem is hard to approximate within any constant factor in general graphs\cite{clique}, however, on bounded degree graphs, any maximal independent set, which can be computed in linear time, is a constant-factor approximation of the MIS. 
	
	Given the hardness of approximation of the problem, it is natural to ask for the efficiency of algorithms that run in superpolynomial time. The first step is to discover its exact complexity: is it possible to find an MIS in superpolynomial time but better than a trivial approach? If yes to what extent can we speedup such an algorithm?
	
	Many NP-hard problems are solvable exactly in time $O(2^n\poly(n))$. In particular for the independent set problem the trivial algorithm of testing all possible solutions yields such a running time.
	There are several improvements over the trivial upper bound both in general graphs and graphs of bounded degree~\cite{BourgeoisEPR12,XiaoN17,FominGK09,Robson86}; however all of them have a running time of the form $O(c^n)$ for a certain constant $c<2$. Hence, all of the mentioned algorithms are exponential to $n$. On the other hand, there are NP-hard problems that are subexponential time solvable; e.g.\ by exploiting bidimensionality theory, it is possible to solve several NP-hard problems in time $2^{O(\sqrt{n})}$ in excluded minor graphs~\cite{DemaineFHT04}. Additionally, MIS problem admits subexponential time algorithms on $P_t$-free graphs ~\cite{BacsoLMPTL19,Brause17}. Thus, both theoretically and practically it is interesting to discover NP-hard problems with subexponential time algorithms.
	
	One of the main tools designed to better understand the exact complexity of computational hard problems is the Exponential Time Hypothesis (ETH)~\cite{ImpagliazzoPZ01}. 
	Assuming ETH, there is no algorithm with running time $2^{o(n)}$ to solve the $3$-SAT problem, where $n$ stands for the number of variables in the formula. It has been proven that under the same assumption, several other major problems have no subexponential time algorithms. There are several other known problems that do not admit subexponential time algorithms under ETH, for instance, the $k$-SAT problem~\cite{PaturiPSZ05} and the maximum independent set problem on bounded degree graphs~\cite{JohnsonS99} are among such problems.
	
	Both of the above results, for $k$-SAT and MIS problem on bounded degree graphs, are showing the hardness of exact computation already among the sparse instances. To show the hardness in sparse instances, the main challenge is to transfer a general instance to an instance that is sparse or it has certain structural properties. If the transformation takes subexponential time then we can connect them to the existing known problems to show that such special cases do not admit subexponential time algorithm under the assumption of ETH. Such reductions are mostly known as sparsification lemmas.

	In this work, we continue a similar spirit by transforming from generic bounded degree graphs to $d$-regular graphs, which are quite restricted well-structured graphs. A graph $G$ is $d$-regular if degree of every vertex of $G$ is exactly $d$. We show that for every integer $d>2$ the MIS problem has no subexponential algorithm unless ETH fails.
	
	One of the related work is the result of Mohar~\cite{Mohar01}; he showed that the MIS problem and the vertex cover problem are NP-complete in $3$-regular planar graphs. His reduction in a sense is similar to the one of Johnson and Szegedy~\cite{JohnsonS99} in general graphs: they used, by now the standard, technique of replacing vertices of high degree by paths/cycles and then analyzing the connections. In particular, Johnson and Szegedy proved the following theorem.
	
	\begin{theorem}[Johnson and Szegedy~\cite{JohnsonS99}]\label{thm:ref8}
		Assuming ETH, there is no algorithm with running time $2^{o(n)}$ to compute a maximum independent set in graphs of maximum degree $3$.\end{theorem}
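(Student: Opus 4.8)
The plan is to exhibit a \emph{linear-size} chain of reductions starting from $3$-SAT, so that a hypothetical algorithm solving MIS on maximum-degree-$3$ graphs in time $2^{o(N)}$ (with $N$ the number of vertices) would yield a $2^{o(n)}$ algorithm for $3$-SAT, contradicting ETH. The indispensable first ingredient is the Sparsification Lemma of Impagliazzo, Paturi and Zane~\cite{ImpagliazzoPZ01}, which strengthens ETH to the statement that $3$-SAT admits no $2^{o(n)}$ algorithm even when restricted to instances whose number of clauses $m$ is $O(n)$. This linear coupling of $m$ to $n$ is precisely what will make the later gadget blow-ups harmless.

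Next I would force every variable to occur only a constant number of times. Beginning from a sparsified formula, I would replace each variable $x$ occurring $k$ times by fresh copies $x_1,\dots,x_k$, one per occurrence, and add the cyclic chain of implication clauses $x_1\Rightarrow x_2\Rightarrow\cdots\Rightarrow x_k\Rightarrow x_1$ (each a $2$-clause, padded to a $3$-clause with a dummy if desired) to force all copies to agree in any satisfying assignment. Since $\sum_x k_x = 3m = O(n)$, this keeps the number of variables and clauses at $O(n)$ while guaranteeing a constant bound on per-variable occurrences.

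I would then apply the classical clause-gadget reduction to MIS: each clause becomes a triangle on its three literals, and each literal vertex is joined to every vertex carrying the complementary literal, so that the formula is satisfiable if and only if the graph has an independent set of size equal to the number of clauses. Because occurrences are now bounded, the conflict edges are few, so the graph has $O(n)$ vertices and \emph{constant} maximum degree. To push the maximum degree down to exactly $3$ I would use the standard degree-reduction gadget matching the path/cycle technique mentioned in the introduction: a vertex of degree $d$ is replaced by an even cycle $C_{2d}$ whose alternate vertices inherit the external edges, so that the two maximum independent sets of the cycle encode the in/out states of the original vertex while each cycle vertex carries at most one external edge and thus has degree at most $3$. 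As the total degree is $2|E| = O(n)$, this final step again preserves the size up to a constant factor.

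The step I expect to bear the real weight is the Sparsification Lemma: without the guarantee $m = O(n)$, the occurrence-balancing and clause-gadget constructions could produce $\omega(n)$ vertices (indeed the naive complete-bipartite conflict gadget alone can create $\Theta(m^2)$ edges), and then a $2^{o(N)}$ MIS algorithm would only give a $2^{o(m)}$ bound for $3$-SAT, which ETH does not forbid. Everything downstream is gadget engineering, and the only care required there is bookkeeping: checking that each gadget preserves the optimum exactly (up to a computable additive constant) and expands the vertex count by at most a constant factor, so that the composite reduction is genuinely linear and the $2^{o(n)}$ lower bound transfers intact to graphs of maximum degree $3$.
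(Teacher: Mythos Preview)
The paper does not prove this theorem; it is quoted verbatim as a result of Johnson and Szegedy~\cite{JohnsonS99} and used as a black box in the proof of \Cref{thm:main}. Consequently there is no ``paper's own proof'' to compare your proposal against.

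That said, your sketch is the standard route to this result and is essentially correct. The Sparsification Lemma is indeed the load-bearing step, and your occurrence-bounding and clause-triangle constructions are the expected ones. One small point worth tightening in your degree-reduction step: for MIS the even-cycle gadget you describe needs a careful accounting of the additive shift in the optimum (the two maximum independent sets of $C_{2d}$ have the same size, so the gadget does not by itself distinguish ``$v$ in'' from ``$v$ out''; the usual fix is an odd-length path or a cycle with one designated external vertex removed, so that one state is strictly preferred unless blocked by a neighbour). This is routine bookkeeping and does not affect the linear size bound, but you should state the exact gadget and verify the optimum shift rather than leave it at ``standard''.
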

	
	Similarly, Fleischner et al.~\cite{FLEISCHNER20102742} showed that MIS is NP-hard even in $3$ and $4$-regular Hamiltonian planar graphs\footnote{They used a claim in book~\cite{GareyJ79} which states that MIS is NP-complete in cubic planar graphs. In the book, the authors cited the paper of Garey et al.~\cite{GareyJS76}. To our understanding, this is an incorrect referencing. However later, Mohar~\cite{Mohar01} showed the hardness of the problem in the claimed class. Therefore the result of Fleischner et al.~\cite{FLEISCHNER20102742} is valid.}. Their reduction is a bit more involved than the two others as they had to support the Hamiltonicity of the graph. All of the above constructions are specialized for their specific purpose and we do not see a direct extension of the mentioned papers to general $d$-regular graphs for every $d>2$. 
	
	Similar to the predecessor work, we also discuss the hardness of the MIS problem in regular planar graphs. We show that the simple construction for general graphs extends to $5$-regular planar graphs. Since there is no $6$-regular planar graph, this together with previous results show that in any $d$-regular planar graph ($d=3,4,5$), the problem is NP-hard.\footnote{ $d=1$ is a matching and $d=2$ is a disjoint union of cycles, thus there is a linear-time algorithm for the problem in these cases.}
	
	The closely related problems of finding and counting cliques and finding a minimum vertex cover in regular graphs will be discussed at the end. Our simple gadget construction facilitates further customization for obtaining lowerbounds on the computational complexity of covering problems. 
	
	Before we delve into the technical parts, let us introduce the notation that is used here. We denote by $\N$ the set of natural numbers and for a
	set of integers $\{1,\ldots,k\}$ we write $[k]$. The degree of a vertex $v$ in a graph $G=(V,E)$ is denoted by $d_v$. $\Delta$ stands for the maximum degree of a graph. For a vertex $v$, the closed neighborhood of $v$ is written as $N[v]$ ($N[v]$ contains $v$ and all of its neighbors). 
	\section{MIS has no Subexponential Algorithm on $d$-regular Graphs}

	We may always assume that the maximum degree $\Delta$ of the input graph is odd. 
	Otherwise, we add a complete graph on $\Delta + 2$ vertices to the original graph. The resulting graph has an odd maximum degree $\Delta + 1$ and it is clear that in polynomial time we can construct MIS of the original graph from MIS of the new graph and vice versa. Hence, we have the following assumption for the rest of the paper.
	
	\medskip
	
	\textbf{Assumption:\label{assumption} }$\Delta$ is an odd number bigger than $1$.

	\subsection*{Gadget Construction}

	%
	%
	%
	%

	For a vertex $v$ of degree $d_v < \Delta$ we construct $\delta_v =\Delta - d_v$ distinct gadgets $H^v_1,\ldots,H^v_{\delta_v}$ as follows (all of them have the same structure). In the following we explain the construction of a single gadget, let say $H$.
	
	First create $(\Delta-1)/2$ complete bipartite graphs $K_1,\ldots,K_{(\Delta-1)/2}$ with partitions of size $\Delta -1$. We name the partitions of the $i$'th bipartite graph $A_i,B_i$, for $i\in[(\Delta-1)/2]$. Add $\Delta-1$ vertices $a_1,\ldots,a_{(\Delta -1)/2},b_1,\ldots,b_{(\Delta-1)/2}$ to the gadget $H$. Connect all vertices of partition $A_i$ (resp.\ $B_i$) to $a_i$ (resp.\ $b_i$). Then connect all $a_i,b_i$'s ( $i\in[(\Delta - 1)/2]$) to a new vertex $h$. The construction of $H$ is completed. By construction, every vertex except $h$, has degree $\Delta$. The degree of $h$ is $2(\Delta - 1)/2 = \Delta - 1$. $h$ is the vertex that connects our gadget to the graph $G$.
	
	Whenever it is necessary, if a gadget $H$ is the $j$'th gadget of a vertex $v$, to distinguish different gadgets, we add indices $v,j$ to $H$ and all of the aforementioned vertices and partitions. E.g.\ instead of a vertex $h$ we may write $h^v_j$. 
	
	The construction of the auxiliary graph $G'$ is pretty simple: take $G$ as a base, then for every $v\in V(G)$ connect all of its gadgets, i.e.\ $H^v_j$'s, to $v$ by adding edges $\{h^v_j,v\}$ for $j\in [\delta_v]$. Let us make some observation on $G'$. First observe that every vertex of $G'$ has degree exactly $\Delta$.
	
	We formalize the second observation for bounding the order of $G'$ in the following.
	\begin{observation}\label{obs:size}
		The order of an attached gadget to any vertex is $O(\Delta^2)$. Since there are at most $\Delta$ such gadgets attached to a vertex $v$, $G'$ has $O(\Delta^3 |V(G)|)$ vertices. As the number of edges of each gadget is at most $\Delta$ times more than its vertices, $G'$ has $O(\Delta^4 |V(G)| + |E(G)|)$ edges.
	\end{observation}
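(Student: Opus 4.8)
The plan is to carry out a direct vertex-and-edge count of a single gadget $H$ and then scale the totals up to $G'$. First I would tally the vertices of $H$ according to the three groups introduced in the construction: the $(\Delta-1)/2$ complete bipartite graphs contribute $\tfrac{\Delta-1}{2}\cdot 2(\Delta-1) = (\Delta-1)^2$ vertices (each has two partitions of size $\Delta-1$), the connector vertices $a_1,\dots,a_{(\Delta-1)/2},b_1,\dots,b_{(\Delta-1)/2}$ contribute another $\Delta-1$, and the apex $h$ contributes one more. Summing gives $(\Delta-1)^2 + \Delta = O(\Delta^2)$, which establishes the first assertion of the statement.

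Next I would lift this estimate to $G'$. Since $\delta_v = \Delta - d_v \le \Delta$, each vertex $v \in V(G)$ receives at most $\Delta$ gadgets, each of order $O(\Delta^2)$, so $v$ accounts for $O(\Delta^3)$ newly added vertices. Summing over the $|V(G)|$ vertices of $G$ and adding back the original $|V(G)|$ vertices (which is absorbed into the larger term) yields the stated bound of $O(\Delta^3 |V(G)|)$ vertices for $G'$.

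For the edge count the key remark is that every gadget is a graph of maximum degree $\Delta$: by construction every vertex other than $h$ has degree exactly $\Delta$, and $h$ itself has degree $\Delta-1$. Hence the handshaking lemma bounds the number of edges of a gadget by $\tfrac{\Delta}{2}$ times its order, so in particular by at most $\Delta$ times its vertex count -- this is precisely the ``$\Delta$ times more than its vertices'' estimate invoked in the statement -- giving $O(\Delta^3)$ edges per gadget. Multiplying by the at most $\Delta\,|V(G)|$ gadgets of $G'$ produces $O(\Delta^4 |V(G)|)$ internal gadget edges; the single connector edge $\{h^v_j, v\}$ per gadget adds only a further $O(\Delta\,|V(G)|)$ edges, and the original $|E(G)|$ edges of $G$ remain, for a grand total of $O(\Delta^4 |V(G)| + |E(G)|)$.

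Since the whole argument is pure enumeration, there is no genuine obstacle; the only points requiring care are the degree bookkeeping inside the gadget (confirming that no vertex exceeds degree $\Delta$, so that the handshaking estimate is legitimately applicable) and checking that the additive $|V(G)|$ and connector-edge terms are indeed dominated by the advertised bounds.
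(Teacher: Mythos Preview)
Your proposal is correct and matches the paper's approach: the paper states this observation without proof, relying on exactly the direct vertex-and-edge enumeration you carry out (counting the $(\Delta-1)^2+\Delta$ vertices per gadget, multiplying by at most $\Delta$ gadgets per vertex, and using the degree bound for the edge count). There is nothing to add.
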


	\subsection{From an MIS in $G'$ to an MIS in $G$}
	The main observation on each individual gadget is the following (we ignore the indices of the gadget for simplicity). In any MIS of $G'$, for a gadget $H$, from each bipartite graph $K_i$ in $H$ we have to take one of its partitions, $A_i$ or $B_i$, entirely into the MIS. The design of $H$ is such that, after the previous selection we can take either of the sets $a^i$'s or $b^i$'s in the solution. But then we are not able to take the vertex $h$ in the MIS. Consequently, vertex $v$ (a vertex of $G$ that is connected to the gadget $H$ in $G'$) is freely available to join MIS later. Hence, the existence of $v$ in MIS merely depends on the structure of $G$, not its connected gadgets. We prove these claims formally in the following.

	First we explain how to construct an MIS in a single gadget $H$.
	\begin{lemma}\label{lem:ISoddGadget}
		Let $H$ be a gadget. $H$ has a maximum independent set $I$ of size
		$(\Delta-1)^2/2 + \Delta - 1$ such that $h\not\in I$.
	\end{lemma}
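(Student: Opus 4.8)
The plan is to first write down an explicit independent set that avoids $h$, and then to certify its maximality by a local, piece-by-piece upper bound. For the construction, from every complete bipartite graph $K_i$ I would place the entire part $A_i$ into $I$ and add the \emph{opposite} apex $b_i$, while discarding all of the $a_i$, all of $B_i$, and $h$. This set is independent: $A_i$ is an independent set of $K_i$, its only outside neighbours are $B_i$ and $a_i$ (both discarded), each $b_i$ is adjacent only to $B_i$ and to $h$ (neither of which is in $I$), the $b_i$ are pairwise non-adjacent, there are no edges between different $K_i$ together with their apex vertices, and $h\notin I$ by choice. Each $K_i$ then contributes $|A_i|+1=\Delta$ vertices, so $|I|=\tfrac{\Delta-1}{2}\cdot\Delta=\tfrac{(\Delta-1)^2}{2}+\tfrac{\Delta-1}{2}$ with $h\notin I$; I would double-check this additive constant against the figure printed in the statement.

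For maximality I would argue that no independent set can beat this count. Partition $V(H)\setminus\{h\}$ into the $\tfrac{\Delta-1}{2}$ blocks $P_i:=A_i\cup B_i\cup\{a_i,b_i\}$, which are pairwise non-adjacent except through $h$, and bound $|J\cap P_i|$ for an arbitrary independent set $J$. Because $K_i$ is complete bipartite, $J$ can meet at most one of $A_i,B_i$; meeting $A_i$ forbids $a_i$ and meeting $B_i$ forbids $b_i$, so $J$ retains at most one apex of the block, giving $|J\cap P_i|\le(\Delta-1)+1=\Delta$. The shared vertex $h$ is treated separately: if $h\in J$ then no apex survives and each block contributes only $|A_i|=\Delta-1$, so $|J|\le\tfrac{\Delta-1}{2}(\Delta-1)+1$, which is strictly below $\tfrac{\Delta-1}{2}\Delta$ for $\Delta\ge 3$. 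Hence a maximum independent set omits $h$, and summing the per-block bound over the partition yields $|J|\le\tfrac{\Delta-1}{2}\Delta=|I|$, so $I$ is maximum and avoids $h$ as required.

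The one genuinely delicate step is the per-block inequality $|J\cap P_i|\le\Delta$: I have to rule out all ``mixing'' strategies, such as picking a few vertices from each side of $K_i$, or trying to keep both apexes together with part of a side. The completeness of $K_i$ kills the first, and the full adjacency of $a_i$ to $A_i$ and of $b_i$ to $B_i$ kills the second. Once this local bound and the separate bookkeeping for the single shared vertex $h$ are in hand, the global bound is a one-line summation, since the $P_i$ partition $V(H)\setminus\{h\}$; everything else is routine verification of independence.
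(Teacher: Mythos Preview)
Your proof is correct and follows the same plan as the paper: exhibit $I=\bigcup_i A_i\cup\{b_1,\dots,b_{(\Delta-1)/2}\}$, then bound an arbitrary independent set block by block (the paper does this via an exchange argument showing each $K_i$ must contribute a full side, you via the direct per-block count $|J\cap P_i|\le\Delta$). You are right to flag the additive constant: both your count and the paper's own upper-bound reasoning give $\tfrac{(\Delta-1)\Delta}{2}=\tfrac{(\Delta-1)^2}{2}+\tfrac{\Delta-1}{2}$, so the ``$+\,(\Delta-1)$'' printed in the statement appears to be a typo for ``$+\,(\Delta-1)/2$''. One small slip: at $\Delta=3$ the $h\in J$ bound $\tfrac{\Delta-1}{2}(\Delta-1)+1$ equals rather than falls strictly below $\tfrac{(\Delta-1)\Delta}{2}$, so you cannot conclude that \emph{every} maximum independent set omits $h$; but the lemma only asks for one that does, and your explicit $I$ already witnesses that.
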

	\begin{proof}
		We first constructively show that an independent set of the claimed size and structure exists; then we prove it is a maximum independent set.
		To construct $I$, take all vertices in partitions $A_i$ ($i\in [(\Delta - 1)/2]$) into $I$, then add all vertices with labels $b_i$ to $I$.
		The size of $I$ is as claimed, it does not contain a vertex $h$, and it is an independent set of $H$. It is left to show that there is no independent set $I'$ of larger size in $H$. 
		
		Clearly, we can take at most $\Delta - 1$ vertices of the $i$'th bipartite graph of the gadget in the MIS. We show that exactly $\Delta - 1$ vertices of such a bipartite graph appears in any MIS.
		
		For the sake of contradiction, suppose that in one of these $K_i$'s, let call it $K$, an MIS $I'$ of $H$ has at most $t\le \Delta - 2$ vertices of $K$. If $t>0$, then w.l.o.g.\ suppose the selected vertices of $K$ are in its $B$ part\footnote{Clearly if a vertex from the $B$ part of $K$ is in an independent set of $H$ then no vertex from its $A$ part can contribute to that independent set, as $K$ is a complete bipartite graph.}. But then if $u\in B\cap I$ then every other $v\in B$ can safely join $I$ since $N(u)=N(v)$ and there is no edge between $u$ and $v$, a contradiction to the assumption that $I$ was of the maximum size.

		It remains to show the claim holds for the case of $t=0$. $t=0$ means no vertex of $K'$ is in $I'$, then we should have both $a_i, b_i\in I$ (otherwise we add one side of $K$ to $I'$ and make a larger independent set). If this is the case, we remove $a_i$ from $I$ and add all vertices of the $A$ partition of $K$ to the independent set to make it larger, a contradiction. 
		
		Therefore, in any maximum independent set $I'$, for every bipartite graph $K_i$  one of its partitions is entirely in $I'$. For the remaining undecided vertices, observe that we may take at most $(\Delta - 1)/2$ other vertices in the maximum independent set, this is forced by the choice of the corresponding partitions of bipartite graphs.
	\end{proof}
	
	Now we are ready to establish a connection between MIS of $G$ and $G'$ by the following lemma.
	
	\begin{lemma}\label{lem:isodd}
		Given an integer $k$, there is an independent set $I'$ of $G'$ of size at least $k + \Sigma_{v\in V(G)} (\Delta - d_v) \cdot ((\Delta-1)^2/2 + \Delta - 1)$ if and only if there is an independent set $I$ of size at least $k$ in $G$. Moreover, we can construct $I$ from $I'$ and vice versa in linear time.
	\end{lemma}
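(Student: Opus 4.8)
The plan is to prove both implications by a gluing/restriction argument that leans entirely on Lemma~\ref{lem:ISoddGadget}, which supplies two facts about a single gadget $H$: there is a maximum independent set of size $M:=(\Delta-1)^2/2 + \Delta - 1$ that avoids the connector $h$, and, since this value is the maximum, \emph{every} independent set of $H$ has size at most $M$. Write $C:=\sum_{v\in V(G)}\delta_v\cdot M$ for the additive constant in the statement, where $\delta_v=\Delta-d_v$ is the number of gadgets hanging off $v$; this $C$ equals the total number of gadgets times $M$. The key structural remark is that the only edges of $G'$ not already present inside $G$ or inside a single gadget are the boundary edges $\{h^v_j,v\}$, and we have $G'[V(G)]=G$. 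This is what makes both directions go through.

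For the forward direction, given an independent set $I$ of $G$ with $|I|\ge k$, I would build $I'$ by taking $I$ itself together with, for every gadget $H^v_j$, the size-$M$ maximum independent set guaranteed by Lemma~\ref{lem:ISoddGadget} that omits $h^v_j$. To see that $I'$ is independent in $G'$: each gadget contributes an independent set of its own induced subgraph; distinct gadgets share no edge; the vertices of $I$ are pairwise non-adjacent in $G$ and hence in $G'$ (since $G'[V(G)]=G$); and the only potential clash is along a boundary edge $\{h^v_j,v\}$, which cannot be violated because $h^v_j$ was deliberately excluded. Its size is $|I|+C\ge k+C$, as required.

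For the backward direction, given $I'$ in $G'$ with $|I'|\ge k+C$, I would simply set $I:=I'\cap V(G)$, which is independent in $G$ because $G$ is the induced subgraph $G'[V(G)]$. For the size bound, partition $I'$ according to $V(G)$ and the (vertex-disjoint) gadgets, so that $|I'|=|I|+\sum_{v,j}|I'\cap V(H^v_j)|$. Each term $|I'\cap V(H^v_j)|$ is the size of an independent set of the gadget $H^v_j$ (the boundary edge leaves the gadget, so restricting to $V(H^v_j)$ yields an independent set of $H$ itself), hence at most $M$ by Lemma~\ref{lem:ISoddGadget}. Summing over all gadgets bounds the gadget contribution by $C$, so $|I|\ge |I'|-C\ge k$. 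Both maps are clearly computable in linear time, since the forward map appends a fixed independent set per gadget and the backward map is a restriction.

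I expect no serious obstacle here: the content is entirely in Lemma~\ref{lem:ISoddGadget}, and the rest is bookkeeping. The one place to be careful is the upper bound in the backward direction—one must argue that restricting a global independent set to a gadget really does yield an independent set \emph{of the abstract gadget} $H$, so that the size-$M$ maximality applies, which is precisely where the fact that each boundary edge $\{h^v_j,v\}$ has its second endpoint outside the gadget is used.
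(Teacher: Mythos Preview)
Your proposal is correct and follows essentially the same approach as the paper: build $I'$ from $I$ by gluing in the $h$-avoiding maximum independent set of each gadget supplied by Lemma~\ref{lem:ISoddGadget}, and recover $I$ from $I'$ by restricting to $V(G)$ and using the size-$M$ upper bound from Lemma~\ref{lem:ISoddGadget} on each gadget to force $|I'\cap V(G)|\ge k$. Your write-up is in fact more careful than the paper's about why the restriction of $I'$ to a gadget is an independent set of that gadget and why no boundary edge is violated in the forward direction.
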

	\begin{proof}
		The only if direction is straightforward: initialize $I'=I$ then add all maximum independent sets of all gadgets, computed by the approach explained in the proof of~\Cref{lem:ISoddGadget}, to $I'$. The size of $I'$ is as claimed. On the other hand, none of the vertices of gadgets that are connected to the vertices of $G$ are in $I'$. It means that there is no conflict between choices in gadgets and vertices in $I$, hence $I'$ is an independent set of the claimed size.
		
		For the if part, by~\Cref{lem:ISoddGadget} there are at most $\Sigma_{v\in V(G)} (\Delta - d_v) \cdot (\Delta-1)^2/2 + \Delta - 1)$ vertices in $I'$ that are in $G'-V(G)$. Hence, at least $k$ vertices $I=\{u_1,\ldots,u_k\}$ of $I'$ belong to both $G$ and $G'$, thus $I$ is an independent set of size $k$ in $G$.
	\end{proof}
	
	The main theorem is the consequence of the previous lemmas and the sparsification lemma for the MIS problem.
	\begin{theorem}\label{thm:main}
		There is no algorithm with running time $2^{o(|E|)}$ to solve the maximum independent set problem in $d$-regular graphs unless ETH fails.
	\end{theorem}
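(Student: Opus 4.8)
The plan is to reduce from the maximum independent set problem on graphs of maximum degree $3$, whose ETH-hardness at $2^{o(n)}$ is exactly \Cref{thm:ref8}. Fix an odd target degree $d>2$ and let $G$ be an arbitrary hard instance of max-degree-$3$ MIS on $n$ vertices. When $d=3$ the graph already has maximum degree $\Delta=d$; for larger odd $d$ I would first attach a disjoint copy of the $d$-regular clique $K_{d+1}$, which raises the maximum degree to exactly $d$ and the independence number by exactly $1$, so that the \textbf{Assumption} ($\Delta=d$ odd) holds with $\Delta$ genuinely equal to the maximum degree. Feeding this graph into the gadget construction yields the $d$-regular auxiliary graph $G'$. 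The content of \Cref{lem:isodd} is that this transformation is MIS-preserving up to the fixed additive constant $\Sigma_{v\in V(G)}(\Delta-d_v)\cdot((\Delta-1)^2/2+\Delta-1)$ and is computable in linear time in both directions, so a maximum independent set of $G'$ yields a maximum independent set of $G$ by subtracting this constant.

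The second ingredient is a size accounting. By \Cref{obs:size} the graph $G'$ has $O(d^3 n)$ vertices and $O(d^4 n + |E(G)|)$ edges; since $d$ is a fixed constant and $|E(G)|=O(n)$ for a graph of maximum degree $3$, this gives $|E(G')|=O(n)$. This linear bound on $|E(G')|$ is exactly what lets me convert an edge-measured running time on $G'$ into a vertex-measured running time on $G$.

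Putting these together, suppose for contradiction that $d$-regular MIS admits an algorithm $\mathcal{A}$ running in time $2^{o(|E|)}$. Given a hard max-degree-$3$ instance $G$ on $n$ vertices, I would build $G'$ in linear time, run $\mathcal{A}$ on $G'$ in time $2^{o(|E(G')|)}=2^{o(n)}$ (using $|E(G')|=O(n)$), and recover a maximum independent set of $G$ through \Cref{lem:isodd} in linear time. The composed procedure solves max-degree-$3$ MIS in time $2^{o(n)}$, contradicting \Cref{thm:ref8} and hence ETH. The sparsification lemma enters here to certify that the $2^{o(n)}$ lower bound of \Cref{thm:ref8} may equivalently be read as a $2^{o(|E|)}$ lower bound on these sparse instances, matching the edge-based form of the statement being transported through the reduction.

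The step I expect to be the main obstacle is the parity restriction baked into the gadget: it uses $(\Delta-1)/2$ complete bipartite blocks and so only manufactures $d$-regular graphs for \emph{odd} $d$, so the argument above settles every odd $d>2$ but not the even ones. Handling even $d$ needs a separate device---either a redesigned gadget whose attachment vertex $h$ has degree $d-1$ while the remainder is internally $d$-regular, or a reduction turning an odd-regular hard instance into an even-regular one while controlling the independence number. The delicate point is to re-establish the clean optimum of \Cref{lem:ISoddGadget} (one full bipartition per block, with $h$ excluded from every maximum independent set), since it is precisely the forced exclusion of $h$ and the exact gadget optimum that make the additive constant of \Cref{lem:isodd} independent of the choices made inside each gadget.
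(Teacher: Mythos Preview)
Your argument for odd $d$ is essentially the paper's proof. The only cosmetic difference is that you attach a copy of $K_{d+1}$ to force $\Delta=d$, whereas the paper attaches a star on $d+1$ vertices; both achieve the same effect, since a single vertex of degree $d$ suffices and the resulting additive shift in the independence number is a computable constant. The size accounting via \Cref{obs:size} and the translation via \Cref{lem:isodd} are invoked exactly as in the paper.

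Your concern about even $d$ is well-founded, and in fact the paper's own proof does not resolve it any more cleanly. The paper simply appeals to the parity Assumption (``w.l.o.g.\ $G$ has an odd maximum degree''), but that Assumption operates by adjoining $K_{\Delta+2}$, which raises the maximum degree from $\Delta$ to $\Delta+1$. Hence for an even target $d$ the construction as written outputs a $(d{+}1)$-regular graph rather than a $d$-regular one, and the sentence ``shows the hardness of the MIS on $d$-regular graphs'' is only justified when $d$ is odd. The paper does not supply the separate even-degree gadget you anticipate; your diagnosis of the obstacle, and of what a fix would have to preserve (the exact optimum of \Cref{lem:ISoddGadget} with $h$ excluded), is accurate.
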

	\begin{proof}
		
		Let $G_1$ be the graph constructed in the lowerbound proof~\Cref{thm:ref8} (recall that $G_1$ has maximum degree $3$). For any $d>2$ define a graph $G$ to be disjoint union of $G_1$ and a star on $d+1$ vertices. Clearly, there is a subexponential algorithm to find an MIS in $G_1$ if and only if there is such an algorithm for $G$.

		As explained earlier, in the description for Assumption~\ref{assumption}, w.l.o.g.\ we may assume that $G$ has an odd maximum degree. Thus, the reduction in Lemma~\ref{lem:isodd} applied on $G$ shows the hardness of the MIS on $d$-regular graphs. By Observation~\ref{obs:size} the size of each gadget is $\poly(d)$ (independent of the order of $G$), hence the theorem follows.
	\end{proof}
	
	\subsection{Extensions}
	Our construction simply extends to vertex cover and clique problem. On the other hand, another extension is to set up a similar lower bound in planar graphs. Our gadgets are not planar but it is easy to modify the most interior part of the gadgets (the bipartite graphs) to obtain planar gadgets. We explain the case of $5$-regular planar graphs then we talk about the extension to the maximum clique problem.

	\subsection*{Regular Planar Graphs }
	
	As discussed in the introduction, it is well known that the MIS problem is hard in $3,4$-regular planar graphs. We do not know if there is any result to show the hardness for $5$-regular planar graphs. Here we present a simple construction to show the hardness of MIS (and consequently minimum vertex cover) in these graphs. The construction is similar as before, we keep vertices $a_i,b_i$ as we had, however, instead of bipartite graphs in the gadget, we insert a modified icosahedron as drawn in~\Cref{fig:icosahedron}, we call this graph $\mathcal{X}$.

	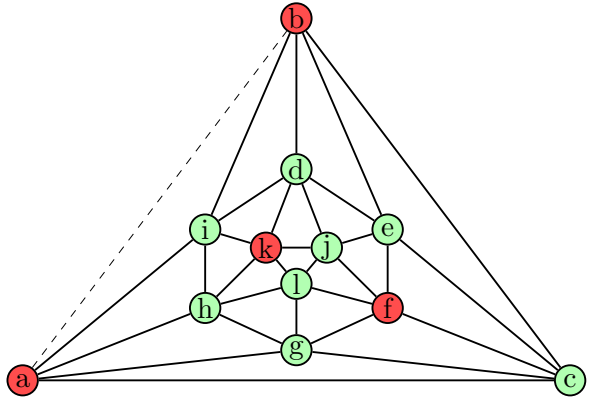
\begin{figure}
		\centering
		\begin{tikzpicture}[scale=0.8,every node/.style={draw=black,text=black,circle,fill=green!30,inner sep=0pt, minimum size=0.4cm},every path/.style={line width=0.25mm,draw=black}]
		\node[fill=red!70] (a) at (0,0) {a};
		\node[fill=red!70] (b) at (4.5,6) {b};
		\node (c) at (9,0) {c};
		
		\node (d) at (4.5,3.5) {d};
		\node (e) at (6,2.5) {e};
		\node[fill=red!70] (f) at (6,1.2) {f};
		\node (g) at (4.5,0.5) {g};
		\node (h) at (3,1.2) {h};
		\node (i) at (3,2.5) {i};
		\node (j) at (5,2.2) {j};
		\node[fill=red!70] (k) at (4,2.2) {k};
		\node (l) at (4.5,1.6) {l};

		\draw [dashed, line width=0.1mm] (a) to (b);
		\draw (a) to (c);
		\draw (b) to (c);
		\draw (d) to (e);
		\draw (e) to (f);
		\draw (f) to (g);
		\draw (g) to (h);
		\draw (h) to (i);
		\draw (i) to (d);
		\draw (b) to (d);
		\draw (b) to (i);
		\draw (b) to (e);
		\draw (i) to (a);
		\draw (h) to (a);
		\draw (g) to (a);
		\draw (e) to (c);
		\draw (f) to (c);
		\draw (g) to (c);
		\draw (l) to (k);
		\draw (j) to (k);
		\draw (j) to (l);
		\draw (j) to (e);
		\draw (j) to (f);
		\draw (j) to (d);
		\draw (l) to (f);
		\draw (l) to (g);
		\draw (l) to (h);
		\draw (k) to (h);
		\draw (k) to (i);
		\draw (k) to (d);
		\end{tikzpicture}
		\caption{Graph $\mathcal{X}$ is obtained by deleting an edge $\{a,b\}$ from an icosahedron. The vertices $a,b,k,f$ (in red) form an independent set of size $4$ and every other independent set has size less than $4$.}
		\label{fig:icosahedron}
	\end{figure}
	
	\begin{lemma}\label{lem:planar}
		$\mathcal{X}$ has a maximum independent set of size~$4$ and both vertices $a,b$ will be in any MIS.
	\end{lemma}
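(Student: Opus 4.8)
The plan is to compare $\mathcal{X}$ against the full icosahedron $I$, from which it differs only by the single deleted edge $\{a,b\}$. The entire argument rests on one fact: the independence number $\alpha(I)$ of the icosahedron is exactly $3$. Granting this, both assertions follow almost immediately. For the lower bound I would simply exhibit the set $\{a,b,k,f\}$ highlighted in \Cref{fig:icosahedron} and verify from the edge list that these four vertices are pairwise non-adjacent in $\mathcal{X}$: reading off neighborhoods, $a$ and $b$ are the only potentially offending pair, and that is exactly the edge we removed. Hence $\mathcal{X}$ has an independent set of size at least $4$.

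The hard — or at least the only non-mechanical — part will be proving that every independent set of the icosahedron has size at most $3$. I would do this by a short case analysis using the standard description of $I$ as two apex vertices $T,B$ together with an upper and a lower pentagon joined in antiprism fashion. If a candidate independent set contains an apex, say $T$, it must avoid the entire upper pentagon, leaving only $B$ and the lower pentagon; since $B$ is adjacent to all of the latter and the pentagon is a $C_5$ with independence number $2$, at most $3$ vertices survive. If it avoids both apices, the set lives in the pentagonal antiprism, where any two non-adjacent vertices of one pentagon are adjacent to four vertices of the other, leaving only one free and again capping the total at $3$. (Alternatively, one may just cite the well-known value $\alpha(I)=3$.)

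With $\alpha(I)=3$ in hand, I would record the key structural observation that a vertex set which is independent in $\mathcal{X}$ but \emph{not} independent in $I$ must contain both endpoints $a$ and $b$ of the deleted edge, since $\{a,b\}$ is the unique edge in which the two graphs differ. Now take any independent set $S$ of $\mathcal{X}$ with $|S|\ge 4$. If $S$ did not contain both $a$ and $b$, then $S$ would also be independent in $I$, forcing $|S|\le 3$, a contradiction; hence $S\supseteq\{a,b\}$. This already establishes the second claim — every maximum independent set of $\mathcal{X}$ contains $a$ and $b$. For the matching upper bound, note that $S\setminus\{a\}$ no longer contains both $a$ and $b$, so it is independent in $I$ and therefore has size at most $3$; thus $|S|\le 4$. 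Combined with the size-$4$ example, this yields $\alpha(\mathcal{X})=4$ and completes the proof, with the case analysis for $\alpha(I)=3$ being the sole genuine obstacle.
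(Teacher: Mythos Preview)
Your proof is correct and takes a genuinely different route from the paper. The paper argues entirely inside $\mathcal{X}$: it uses that every vertex other than $a,b$ has degree~$5$ and that any two non-adjacent vertices share at most two neighbours, so two independent vertices $x,y\in V(\mathcal{X})\setminus\{a,b\}$ already satisfy $|N[x]\cup N[y]|\ge 10$; a third such vertex then forces $|N[x]\cup N[y]\cup N[u]|=12$, making the triple maximal and ruling out a fourth vertex unless the remaining two are exactly $a$ and $b$. You instead reduce everything to the single external fact $\alpha(\text{icosahedron})=3$: since $\mathcal{X}$ differs from the icosahedron only in the edge $\{a,b\}$, any independent set of $\mathcal{X}$ not containing both $a$ and $b$ is still independent in the icosahedron and hence has size at most~$3$, while $S\setminus\{a\}$ gives the matching upper bound $|S|\le 4$. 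Your approach is conceptually cleaner and yields the upper bound in one line, at the cost of importing (or reproving) the icosahedron's independence number; the paper's approach is fully self-contained but relies on the unproven structural claim about common neighbours. Your antiprism case for $\alpha=3$ is slightly terse---you should also note that picking at most one vertex from each pentagon gives at most~$2$---but the argument is sound.
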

	\begin{proof}
		One can observe that vertices $a,b,f,k$ all together form an independent set of the claimed attributes. We prove that this is the only MIS of $\mathcal{X}$ by showing that in any MIS, both vertices $a,b$ are present. 
		
		Before proving the above claim let us explain the general idea: if $a,b$ are in an MIS $I$, then the only remaining vertices in $X'=\mathcal{X}-(N[a]\cup N[b])$ are $k,l,j,f$. In this case, it is easy to see that $I$ is actually $\{a,b,k,f\}$, since $j,l$ have degree $3$ in $X'$. The main issue arises when at least one of the vertices $a$ or $b$ are not in $I$. Here we calculate the size of neighborhood of any independent set $I'$ on $3$ vertices $x,y,u\in \mathcal{X} - \{a,b\}$. 
		
		We show that $|N[I']|=|N[x]\cup N[y]\cup N[u]| = 12$. Note that $12$ is the total number of vertices of $\mathcal{X}$. It means that $I'$ already is a maximal independent set and we cannot add any other vertex to it. Thus if $I$ is a maximum independent set, then $I$ does contain both $a,b$ and the lemma follows. Now we prove the claim that the size of neighborhood of any such triple (as an independent set) is actually 12.
		
		Except $a,b$, every other vertex has degree $5$ and every two non-adjacent vertices share at most $2$ neighbors. Hence, if there are two vertices $x,y\in V(\mathcal{X}) - \{a,b\}$ in an MIS $I$, then $|N[x]\cup N[y]\}|\ge 6+6-2=10$. The latter means that all of the vertices of $\mathcal{X}$ except at most two of them, let call them $u,v$, are in the closed neighborhood of $x,y$. Clearly both $u,v$ are in $I$ otherwise the size of $I$ is less than $4$. If $\{u,v\}=\{a,b\}$ as explained earlier, we are done. Hence, w.l.o.g.\ let suppose $u\notin \{a,b\}$. Since $|N[u]|=6$ and $u$ is not neighbor of $x,y$, we conclude that $u$ is neighbor to at least $6-2-2$ vertices that are not in $N[x]\cup N[y]$. It means that $x,y,u$ together are neighbor of all vertices of $\mathcal{X}$, hence, $v$ cannot be in $I$, a contradiction to the assumption that $I$ was an MIS. 
	\end{proof}
	
	The rest of the proof is straightforward from the above lemma and our general construction. Construct a gadget $H$ by taking $2$ copies $X_1,X_2$ of $\mathcal{X}$ and adding a vertex $h$. Then connect $a_i,b_i\in X_i$ to $h$ (we added indices to vertices of $\mathcal{X}$ to distinguish the disjoint copies of it). Eventually, attach the copies of the gadget $H$ to every vertex that has a degree less than $5$ in a given planar graph in the same way as for general graphs to obtain a $5$ regular planar graph $G'$.
	
	\begin{theorem}
		The MIS problem is NP-hard in $5$-regular planar graphs.	
	\end{theorem}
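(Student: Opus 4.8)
The plan is to give a polynomial-time reduction from the maximum independent set problem on cubic planar graphs, which is \textrm{NP}-hard by Mohar~\cite{Mohar01}, following exactly the template of \Cref{lem:isodd} but with the icosahedron-based gadget $H$ in place of the bipartite gadget. Let $G$ be a cubic planar graph; since $d_v = 3 < 5$ for every vertex, I attach $\delta_v = 5 - d_v = 2$ copies of $H$ to each $v$, joining each copy to $v$ through the single edge $\{h,v\}$, and call the result $G'$. Because $H$ is attached through one edge only, the analysis decouples $G$ from its gadgets exactly as in the general construction, and the whole reduction runs in (even linear) polynomial time, since $|V(H)| = 2|V(\mathcal{X})| + 1$ is a constant (compare \Cref{obs:size}).

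The first step is to pin down the behaviour of a single gadget, which is the planar analogue of \Cref{lem:ISoddGadget}. Using \Cref{lem:planar}, each copy $X_i$ of $\mathcal{X}$ admits an independent set of size $4$ containing both $a_i,b_i$, and \emph{every} independent set of $X_i$ that omits $a_i$ or $b_i$ has size at most $3$. I would then argue that the maximum independent set of $H$ has size $8$ and never contains $h$: if $h \notin I$ then $|I| \le 4 + 4$, while if $h \in I$ then $h$ kills all of $a_1,b_1,a_2,b_2$, forcing $|I \cap X_i| \le 3$ and hence $|I| \le 1 + 3 + 3 = 7$; the value $8$ is realised by the union of the two size-$4$ sets of \Cref{lem:planar}, which indeed avoids $h$. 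With this in hand, the reduction lemma is the verbatim analogue of \Cref{lem:isodd}: $G'$ has an independent set of size at least $k + 8\sum_{v \in V(G)}(5 - d_v)$ if and only if $G$ has one of size at least $k$. The forward direction adds a maximum (size-$8$, $h$-free) independent set of every gadget to a given $I \subseteq V(G)$, with no conflicts because no gadget set contains its $h$; the backward direction uses that $I' \cap V(H)$ is independent in $H$ and therefore of size at most $8$ per gadget, leaving at least $k$ vertices of $I'$ inside $V(G)$.

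The last step is to check $5$-regularity and planarity. Regularity is a short degree count: each $a_i,b_i$ has degree $4$ in $\mathcal{X}$ and gains the edge to $h$; every other vertex of $\mathcal{X}$ already has degree $5$; $h$ is joined to $a_1,b_1,a_2,b_2$ and to $v$, giving degree $5$; and each original $v$ absorbs exactly $5 - d_v$ new edges, one per gadget, reaching degree $5$. The real work is planarity, and this is the step I expect to be the main obstacle. I would first observe that $\mathcal{X}$ is planar (it is the icosahedron minus an edge, \Cref{fig:icosahedron}) and that deleting $\{a,b\}$ merges the two triangles on that edge into a single quadrilateral face carrying both $a$ and $b$ on its boundary; choosing this face as the outer face, one can place $h$ in the outer region and draw $ha_1,hb_1$ without crossings, then embed the second copy $X_2$ in one of the two regions cut off by the path $a_1\,h\,b_1$ and draw $ha_2,hb_2$ likewise, yielding a planar embedding of $H$ with $h$ on its outer face. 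Finally, since each gadget is attached to $G$ through the single edge $\{h,v\}$ and $h$ lies on the outer face of $H$, I can insert each of the $\delta_v$ gadgets into a face of a fixed planar embedding of $G$ incident to $v$ (several gadgets may share a face), so that no crossings are introduced and $G'$ is planar. Combining the reduction lemma with the \textrm{NP}-hardness of MIS on cubic planar graphs then completes the proof.
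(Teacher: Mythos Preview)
Your proposal is correct and follows essentially the same route as the paper: reduce from MIS on cubic planar graphs via Mohar's result, attach the icosahedron-based gadgets $H$ through the single edge $\{h,v\}$, and use \Cref{lem:planar} to control the MIS inside each gadget; you simply supply more detail on $5$-regularity and on the planar embedding than the paper does. Your per-gadget constant $8$ (giving $k + 8\sum_{v}(5-d_v)$) is the right one---the paper's stated $4$ appears to be a slip, since each $H$ contains two copies of $\mathcal{X}$, each contributing $4$ to the MIS while leaving $h$ free.
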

	\begin{proof}
		By the result of Mohar~\cite{Mohar01} we know that the MIS problem is NP-hard in cubic planar graphs.
		Given a cubic planar graph $G$, construct a $5$-regular planar graph $G'$ as explained above. $G$ has an independent set of size $k$ if and only if $G'$ has an independent set of size $k+4\Sigma_{v\in V(G)}(5-d_v)$. Hence, the theorem follows.
	\end{proof}
	\subsection*{Triangles and Cliques }
	The gadgets do not have a triangle as a subgraph, on the other hand, the original connections in the graph $G$ are untouched, hence there is a clique on at least ~$k\ge 3$ vertices in $G'$ if and only if there is a clique of order $k$ in $G$. Since the transformation from $G$ to $G'$ happens in linear time on graphs of bounded degree, essentially every hardness result, in graphs of bounded degree, for finding triangles or small cliques extends to the regular graphs.
	
	\section{Conclusion and Future Directions}
	In this work, we showed that the maximum independent set problem has no subexponential algorithm in $d$-regular graphs.
	Our construction, with simple modifications, extends to other covering problems and also to other classes of graphs. We believe this work could ease the way to obtain fine-grained reductions for other problems.
	
	We considered the independent set problem, one of the most basic problems were its sparsification lemma is known. Another interesting direction is to consider the $k$-SAT problem when the corresponding graph has the same degree for all variables and clauses.
	
	\medskip 
	
	\textbf{Acknowledgement: } We thank Kevin Schewior, Sebastian Siebertz, James Preen, Hossein Vahidi and, anonymous reviewers for their improvement suggestions.
	\clearpage
	\bibliographystyle{abbrv}
	\bibliography{references}
		
\end{document}